\colorlet{boxgray}{lightgray!20}
\colorlet{boxgray}{lightgray!20}
\newcommand{\ie}{\emph{i.e.,\ }}
\DeclareMathOperator{\F}{\mathbb{F}}
\newcommand\Mark[1]{\textsuperscript{#1}}
\newsavebox{\ieeealgbox}
\begin{document}
\pagenumbering{gobble}
\title{Approaching Maximum Embedding Efficiency on Small Covers Using Staircase-Generator Codes}

\author{Simona Samardjiska\Mark{1,2}
and
Danilo Gligoroski\Mark{1}}

\authorrunning{S. Samardjiska and D. Gligoroski}

\institute{
Department of Telematics, 
Norwegian University of Science and Technology, 
Trondheim, Norway\Mark{1}\\ 
FCSE, ``Ss Cyril and Methodius'' University, Skopje, Republic of Macedonia\Mark{2}\\
\email{ \{simonas,danilog\}@item.ntnu.no}\\
}
\maketitle

\begin{abstract} We introduce a new family of binary linear codes suitable for steganographic matrix embedding. The main characteristic of the codes is the staircase random block structure of the generator matrix. We propose an efficient list decoding algorithm for the codes that finds a close codeword to a given random word. We provide both theoretical analysis of the performance and stability of the decoding algorithm, as well as practical results.  
Used for matrix embedding, these codes  achieve almost the upper theoretical bound of the embedding efficiency for  covers in the range of  1000 - 1500 bits, which is at least an order of magnitude smaller than the values reported in related works.

\vspace{.2cm}
\textbf{Keywords.} {Steganography, Matrix embedding, Embedding efficiency, Stego-codes.}
\end{abstract}

\section{Introduction}
A widely accepted security model for steganographic systems was given in \cite{cachin1998information}. It is modelled as security in a form of visual and statistical undetectability. How much data can safely be embedded in a cover without being detected is given in \cite{harmsen2005capacity}.

One method for achieving undetectability is  matrix embedding. It has been first informally introduced in \cite{Crandall-Matrix-Embedding-1998} and more formally in \cite{vanDijk2001} and \cite{galand2003information}. It is a steganographic method that uses $(n, k)$ linear binary codes $\mathcal{C}$ to transmit messages of length $n-k$ bits, embedded into arbitrary covers of $n$ bits using as small as possible number of changes.
Matrix embedding addresses two important design goals in the steganography schemes: 
1. To achieve as high ratio as possible between the size of the embedded message and the size of the cover message (so called \emph{payload}); 2. To achieve as high security as possible in a form of visual and statistical undetectability.

The so called \emph{cover} in practice can come from different sources such as binary images, textual or binary files, line drawings, three-dimensional models, animation parameters, audio or video files, executable code, integrated circuits and many other sources of digital content \cite[Ch.1]{Cox:2007:DWS:1564551}.

In \cite{galand2003information} the theoretical bound  that is achievable with matrix embedding was given and it was shown that random linear codes asymptotically achieve the theoretical bound.

\subsection{Related Work}
Having just a theoretical result that long random linear codes can achieve the embedding capacity, is far from satisfactory in practice. Thus in a series of works, different practical matrix embedding algorithms have been proposed. 
In \cite{fridrich2006matrix}, two practical schemes for matrix embedding based on random linear codes and simplex codes were proposed. 
These schemes use relatively small values of $k\le 14$, one reason being that the embedding algorithm takes $O(n 2^k)$ operations.
Modifications of the schemes from \cite{fridrich2006matrix} with improved efficiency were proposed in a series of papers such as \cite{gao2009improving,chen2010efficient,wang2012fast,liu2014adaptive,mao2014fast}. These schemes, although efficient, do not offer embedding efficiency close to the upper theoretical bound.
In \cite{fridrich2007practical}, the authors propose to use low-density generator matrix (LDGM) codes (defined in  \cite{wainwright2005lossy}) for matrix embedding. Because of the efficient decoding algorithms for LDGM codes, the proposed schemes are quite fast and achieve very good embedding efficiency close to the theoretical bound, but for large $n$ in the range $10^5 - 10^6$.

\subsection{Our Contribution}
The contributions of this paper are severalfold: 1. We define a new family of binary linear codes with a generator matrix $G$ that has a specifically designed staircase random block structure. 2. We propose an efficient list decoding algorithm for these codes. 3. We perform an initial theoretical analysis of the stability and the complexity of the decoding algorithm. 4. We use these codes for matrix embedding. 5. We report the theoretical and experimental results in comparison with similar codes. The results show that our codes, while practical for matrix embedding, are also competitive with the best known codes. In particular, our codes achieve almost the upper theoretical bound of the embedding efficiency for the length of the cover in the range of  $1000 - 1500$ which is at least an order of magnitude smaller than the values reported in related works.

\section{Basics of Matrix Embeding}\label{sec:Prelim}
Throughout the paper, we will denote by $\mathcal{C}\subseteq \mathbb{F}^n_2$ a binary $(n, k)$ code of length $n$ and dimension $k$. We will denote the $k\times n$ generator matrix of the code by $\mathbf{G}$, and the $(n-k)\times n$ parity check matrix by $\mathbf{H}$.
The Hamming distance between $\mathbf{x}, \mathbf{y}\in \mathbb{F}^n_2$ will be denoted by $d(\mathbf{x},\mathbf{y})$, and the Hamming weight of a word $\mathbf{x}\in \mathbb{F}^n_2$  by $wt(\mathbf{x})$.

A crucial characteristic of a code that we will use is its covering radius $R$,  defined as $ R=\max_{\mathbf{x}\in\F_2^n} d(\mathbf{x},\mathcal{C})$
 where $ d(\mathbf{x},\mathcal{C})=\min_{\mathbf{c}\in\mathcal{C}}{d(\mathbf{x},\mathbf{c})}$ is the distance of $\mathbf{x}$ to the code $\mathcal{C}$. The average distance to a code $\mathcal{C}$ \cite{fridrich2006matrix} is defined by $2^{-n}\sum_{\mathbf{x}\in\F_2^n}d(\mathbf{x},\mathcal{C})$, and it represents the average distance between a randomly selected word from $\F_2^n$ and the code $\mathcal{C}$.  Clearly, $R_a \leq R$.

Suppose that we want to embed a message $\mathbf{m}\in \mathcal{M}=\mathbb{F}^{n-k}_2$ in a given cover object $\mathbf{y}\in \mathbb{F}^n_2$. Without loss of generality, both can be considered as random binary strings. Furthermore, the position of the cover object in a document (the block where the message is to be embedded) is known to both the sender and the recipient. 

\begin{definition}
A \emph{steganographic scheme} on $\F^n_2$ with a distortion bound $R$ is a pair of embedding and extraction mappings $Emb: \F^n_2 \times \mathcal{M} \rightarrow \F^n_2$,   $Ext: \F^n_2 \rightarrow  \mathcal{M}$, such that
\begin{eqnarray*}
Ext(Emb(\mathbf{y},\mathbf{m}))=\mathbf{m},& \forall \mathbf{y}\in \F^n_2, \forall\mathbf{m}\in \mathcal{M},\ and \\
d(\mathbf{y},Emb(\mathbf{y},\mathbf{m}))\leq R.&  
\end{eqnarray*}
\end{definition}

The embedding rate, or relative message length, is the value $\alpha = (n-k)/n$, and the lower embedding efficiency is the value $e=(n-k)/R$. For the average absolute distortion (average number of changes) $R_a$,  we have the average embedding efficiency $e_a=(n-k)/R_a$. 

 Let $\mathcal{C}$ be a binary $(n, k)$ code with a generator matrix $\mathbf{G}$ and a parity check matrix $\mathbf{H}$, both given in a systematic form. We assume that the sender and the recipient share the matrix $\mathbf{H}$ (and thus the matrix $\mathbf{G}$ as well). 
 
Algorithm~\ref{alg:matrixembedding}, taken from \cite{galand2003information} describes a general matrix embedding scheme for any linear code $\mathcal{C}$.   

\begin{figure}[h]
\begin{center}
\fbox{
  \parbox{.76\textwidth}{\smallskip
\textbf{Algorithm 1}: Matrix embedding
}}
\fbox{
  \parbox{.7535\textwidth}{\smallskip
$Emb(\mathbf{y},\mathbf{m})$:
\begin{enumerate}
\item Set $\mathbf{z}=\mathbf{y}\mathbf{H}^{\top}+\mathbf{m}$.

\item Let $\mathbf{c_0}$ be any such that $\mathbf{c_0}\mathbf{H}^{\top}=\mathbf{z}$.

\item Find the closest codeword $\mathbf{x}\mathbf{G}$ to $\mathbf{c_0}$ using some efficient algorithm.

\item Set $\mathbf{e}=\mathbf{c_0} + \mathbf{x}\mathbf{G}$.

\item Embed the message as $\mathbf{y'}=\mathbf{y}+\mathbf{e}$.
\end{enumerate}

\textbf{Output:} A stego object $\mathbf{y'}$.\smallskip\smallskip

$Ext(\mathbf{y'})$:\smallskip

\ \ Extract the message as $\mathbf{m}=\mathbf{y'}\mathbf{H}^{\top}$.\smallskip

\textbf{Output:} The extracted message $\mathbf{m}$.\smallskip
} }
\end{center}
\caption{Algorithm for matrix embedding using code $\mathcal{C}$}
\label{alg:matrixembedding}
\vspace{-.3cm}
\end{figure}

The crucial step in Algorithm 1 is Step 3, \ie finding the closest codeword $\mathbf{x}\mathbf{G}$ to $\mathbf{c_0}$. Thus, the performance of such a scheme is determined by the covering radius $R$ of the code $\mathcal{C}$ which guarantees a lower bound on the embedding efficiency, but also on the average distance to the code $R_a$ which coincides with the average distortion. 
Therefore, the art of designing a practical steganographic scheme lies in finding codes of small average distance $R_a$ for which  efficient algorithms for finding a codeword within $R_a$ exist.
However, both problems are known to be particularly difficult and challenging. 

It is known \cite{fridrich2006matrix} that when $n \rightarrow \infty$, random codes asymptotically achieve the upper bound for the embedding efficiency $e$ for a given embedding rate $\alpha$:
\begin{equation}\label{eq:bound}
e \leq \frac{\alpha}{H^{-1}(\alpha)}, \ 0\leq \alpha\leq 1
\end{equation}
where $H$ is the binary entropy function.

The authors of \cite{fridrich2007practical} report several LDGM codes of dimension $n=10000$ and $n=100000$ with extremely good embedding efficiency. However, there are no known (to the authors' knowledge) codes, of smaller  dimension, close to the bound~\eqref{eq:bound}, or to the codes from \cite{fridrich2007practical}. Furthermore, to the authors' knowledge, the codes reported in \cite{fridrich2007practical} have currently the best performance regarding embedding efficiency.

\section{Staircase-Generator Codes}
We consider a binary $(n, k)$ code $\mathcal{C}$ with the following generator matrix in standard form:
	\begin{equation}\label{eq:G-matrix}
		 \scalebox{.6}{\begin{tikzpicture}[
 		gray box/.style={top color=lightgray!30,bottom color=lightgray!30,middle color=lightgray!30},
 		B style/.style={gray}
 		]	  
 			 
	 \node[matrix of math nodes,matrix anchor=center, baseline=(M.center),
	 		left delimiter=(, 
	 		right delimiter=),
	 		ampersand replacement=\&, 
	 		inner sep=0pt,
   		row sep=4pt,
  		column sep=0pt, 
  		minimum size=17pt, 
   		nodes in empty cells,
   		column 3/.style ={inner xsep=10pt},
   		column 4/.style ={inner xsep=-10pt},
   		column 5/.style ={inner xsep=10pt},
   		column 6/.style ={inner xsep=10pt},
   		column 10/.style ={inner xsep=1pt}
   		] (M)
    {
      \phantom{B_1}\hspace*{60pt}\&\& \&	\&\phantom{B_2}	\&\phantom{B_3}	\& \phantom{B_w}\&\phantom{B_w}\&\phantom{B_w}\&\phantom{B_w}\&	\& \phantom{B_w}\\
     \&\& 	\&	  \&\phantom{B_2}	\&\phantom{B_3}	\&\&						 \& 						\&\phantom{B_w}	\& 	\&\phantom{B_w}				 \\
	\&\&\& 		\&\phantom{B_2}	\&\phantom{B_3}	\&\phantom{B_w}	\& \dots\& \&\phantom{B_w}	\&	\&\phantom{B_w}			 	\\
										\&\&							\&		\&\phantom{B_2} 							\&\phantom{0 \quad 0}\&\phantom{B_w}	\& 						 \&		 					\&							\&	\&			 	\\
										\&\&							\&		\& 							\&							\&							\&\dots\&							\&\phantom{B_w}	\&	\&			 	 \\
			  						\&\&\phantom{B_2}	\&		\&	\&							\&							\&\phantom{B_w}\&\phantom{B_w}\&\phantom{B_w}					\& 					\& \phantom{B_w} \\
    };
		
		\draw[shorten <= 5pt,shorten >= 5pt] (M-1-1.north west) -- node[scale=1.8, fill=white]{$I_k$} (M-6-4.south);		
		
		\draw[] ([xshift=12pt]M-1-4.north west) -- node[pos=0.15, fill=white, yscale=3]{} ([xshift=12pt]M-6-4.south west);

   	\node[left=10pt of M, scale=1.8] (G) {$\mathbf{G} =$};

   	\node[scale=2.5] at (M-5-6) {$0$};

   	\draw[] (M-2-5.north west) -- (M-1-5.north west) -- (M-1-7.north east);	
   	\draw[] (M-1-8.north east) -- (M-1-12.north east) -- (M-6-12.south east);
  	\draw[] (M-6-12.south east) -- (M-6-10.south west) -- (M-6-10.north west);
   	\draw[] (M-2-5.north west) -- (M-2-5.south west) -- (M-2-5.south east);
	\draw[] (M-2-5.south east) -- (M-3-5.south east) -- (M-3-7.south west);

		\begin{pgfonlayer}{background}  
			\path [fill=boxgray] (M-1-5.north west) rectangle node[scale=1.2,black]{$B_1$} (M-2-5.south east);  
			\path [fill=boxgray] (M-1-5.north east) rectangle node[scale=1.2,gray]{$B'_2$} (M-2-7.south west); 
			\path [fill=boxgray] (M-2-5.south east) rectangle node[scale=1.2,black]{$B_2$} (M-3-7.south west);  
			\path [decoration={random steps,segment length=10pt}, fill=boxgray] (M-1-7.north west) -- (M-4-7.west) decorate {-- (M-4-7.center) -- (M-1-7.north east)} -- cycle;
			\path [decoration={random steps,segment length=10pt}, fill=boxgray] (M-1-9.north west) decorate{-- (M-6-9.north)} -- (M-6-9.north east) -- (M-1-9.north east) -- cycle;
			\path [fill=boxgray] (M-1-10.north west) rectangle node[scale=1.2,gray]{$B'_v$} (M-2-12.south east);
			\path [fill=boxgray] (M-2-10.south west) rectangle (M-5-12.south east);
			\path [fill=boxgray] (M-5-9.south east) rectangle node[scale=1.2,black]{$B_v$} (M-6-12.south east);			
					
			\draw[densely dashed, draw=lightgray] (M-2-5.south east) -- (M-1-5.north east);
			\draw[densely dashed, draw=lightgray] (M-2-5.south east) -- (M-2-6.south east);%
			
			\draw[densely dashed, draw=lightgray] (M-3-7.south west) -- (M-1-7.north west);
			
			\draw[densely dashed, draw=lightgray] (M-6-10.north west) -- (M-1-10.north west);					\draw[densely dashed, draw=lightgray] (M-6-10.north west) -- (M-6-12.north east);%
			
			\draw[densely dashed, draw=lightgray] (M-2-5.south east) -- (M-2-5.north east);			
			\draw[densely dashed] (M-3-7.south west) -- (M-4-7.west); 
			\draw[densely dashed] (M-6-10.north west) -- (M-6-9.north); 
		\end{pgfonlayer} 
		
		\draw [decorate,decoration={brace,amplitude=0.5em,raise=1pt,mirror},gray, left] ([yshift=-1pt]M-1-5.north west) -- ([yshift=1pt]M-2-5.south west) node[gray, midway, left, xshift=-3.5pt] {$k_1$};
 		\draw [decorate,decoration={brace,amplitude=0.5em,raise=1pt,mirror},gray, left] ([yshift=-1pt]M-2-5.south east) -- ([yshift=1pt]M-3-5.south east) node[gray, midway, left, xshift=-3.5pt] {$k_2$};
 		\draw [decorate,decoration={brace,amplitude=0.5em,raise=1pt,mirror},gray, left] ([yshift=-4pt]M-5-9.south east) -- ([yshift=1pt]M-6-9.south east) node[gray, midway, left, xshift=-3.5pt] {$k_v$};
 		\draw [decorate,decoration={brace,amplitude=0.5em,raise=1pt},gray,above] (M-1-5.north west) -- (M-1-5.north east) node[gray, midway, yshift=4pt] {$n_1$};
 		\draw [decorate,decoration={brace,amplitude=0.5em,raise=1pt},gray,above] (M-1-5.north east) -- (M-1-7.north west) node[gray, midway, yshift=4pt] {$n_2$};
 		\draw [decorate,decoration={brace,amplitude=0.5em,raise=1pt},gray,above] (M-1-9.north east) -- (M-1-12.north east) node[gray, midway, yshift=4pt] {$n_v$};

\end{tikzpicture}
}
	\end{equation}			

Each $B_i$ is a binary matrix of dimension $k_i \times n_i$ whose structure will be discussed shortly, and each $B'_i$ is a random binary matrix of dimension $\sum_{j=1}^{i-1} k_j \times n_i$,
so that $k = k_1 + k_2 + \dotsb + k_v$ and $n = k + n_1 + n_2 + \dotsb + n_v$. Further we set  $K_i = k_1 + \dotsb + k_{i}$ and $N_i = n_1 + \dotsb + n_{i}$. We will call these codes - \emph{Staircase-Generator codes}.

Note that a code with generator matrix of the form \eqref{eq:G-matrix} can be considered as a generalization of at least two well known constructions, taking for example the extended direct sum (EDS) and the amalgamated direct sum (ADS) construction \cite{GrahamS85}. Indeed, it can be seen that the codes \eqref{eq:G-matrix} are a generalization of the EDS construction of the codes with generator matrices $[I_{K_1} B_1],\dots, [I_{K_v} B_v]$, since the matrices $B'_i$ are chosen at random. On the other hand, \eqref{eq:G-matrix} can be seen as an amalgamation of the codes 
with generator matrices 
\begin{equation*}
\left(\ \ I_{K_{i-1}\ \ }\begin{array}{ccccc}
B_1 & B'_2 & \dots & & B'_{i-1}\\
0	& B_2  & & \ & \\
	& 0  & & \ & \\
\vdots &\dots  &  & & \\
 0	& \  & & \!\!\!\!0  & B_{i-1} \\
\end{array}\right) {\rm \ \ and\ } 
\left( I_{K_{i}}\begin{array}{c}
 B'_{i}\\
 B_{i} \\
\end{array}\right)
\end{equation*} for each $i\in\{2,\dots,v\}$.

We should emphasize that, we do not impose any condition on the normality of the codes being amalgamated. In the standard ADS construction, such conditions are necessary in order to prove the improvement on the covering radius. However, in the general case where more than one coordinate is amalgamated, it is much more difficult to theoretically estimate the improvement, and some attempts have not given the desired results \cite{GrahamS85}. 

\subsection{An Algorithm for Matrix Embedding Using Staircase-Generator Codes}
We describe a general list decoding algorithm (Algorithm~\ref{alg:decoding}) for the code $\mathcal{C}$,
that can be used for finding a codeword close to a given random $\mathbf{c_0}$.
Under the condition that Algorithm~\ref{alg:decoding} is efficient, we immediately obtain an efficient variant of Algorithm~\ref{alg:matrixembedding} for matrix embedding. 
Thus, two important questions about Algorithm~\ref{alg:decoding} need to be answered: How efficient it is and what is the expected weight of the obtained $\mathbf{e}$.
In order to answer these questions, we discuss several different design choices.

First, we need to fix some notations. Let $\mathbf{G}_i$ denote the submatrix of the generator matrix $\mathbf{G}$ of size $K_i \times (K_i+N_i)$ 
consisting of an identity matrix $I_{K_i}$ concatenated with the matrix of the entries from the first $K_i$ rows  and the columns $k+1,\dots,k+n_i$ of $\mathbf{G}$. Further, let $w_b$ denote a small constant that we will refer to as \emph{round weight limit}.

\begin{figure}[h]
\begin{center}
\fbox{
  \parbox{.85\textwidth}{\smallskip
\textbf{Algorithm 2}: Decoding
}}
\fbox{
  \parbox{.85\textwidth}{\smallskip
\textbf{Input:} A vector $\mathbf{c_0} \in \mathbb{F}_2^n$, a generator matrix $\mathbf{G}$ of the form (\ref{eq:G-matrix}), and a starting weight limit $w_1$.\smallskip

\textbf{Output:} A vector $\mathbf{e} \in \mathbb{F}_2^n$ of small weight, such that $\mathbf{x}\mathbf{G} = \mathbf{e} + \mathbf{c_0}$, for some $\mathbf{x} \in \mathbb{F}_2^k$.\smallskip

\textbf{Procedure:}

Let $\mathbf{x}_i$ represent the first $K_i$ bits of the unknown vector $\mathbf{x}$.  
During decoding, we will maintain lists $L_1, L_2, \dotsc, L_v$ of triples $(\mathbf{x}_i,\mathbf{e}_i,\mathbf{b}_i)$ where $|\mathbf{x}_i|=K_i$, $|\mathbf{e}_i|=K_i+N_i$, $|\mathbf{b}_i|=K_i+N_i$, that satisfy
\begin{equation}\label{eq:testweight}
\mathbf{x}_i\mathbf{G}_i = \mathbf{e}_i + \mathbf{b}_{i-1}, {\rm\ and\ } wt(\mathbf{e}_i)\leq w_i
\end{equation}

\emph{Step~0}: Let $\mathbf{b}_0=\mathbf{c}_0[1\dots k_1]||\mathbf{c}_0[k+1 \dots k+n_1]$ 
and $\mathbf{x}_0$ and $\mathbf{e}_0$ be vectors of dimension $0$. Set a starting list
 $L_0=[(\mathbf{x}_0,\mathbf{e}_0, \mathbf{b}_0)]$.
\smallskip

\emph{Step~$1 \leq i \leq v$}: 

For each $(\mathbf{x}_{i-1},\mathbf{e}_{i-1},\mathbf{b}_{i-1}) \in L_{i-1}$, add all $(\mathbf{x}_{i},\mathbf{e}_{i},\mathbf{b}_{i})$ to $L_i$, that satisfy \eqref{eq:testweight} and $ wt(e_i)\leqslant w_b$, 
where $\mathbf{x}_{i}=\mathbf{x}_{i-1}||x_i$, $\mathbf{e}_{i}=\mathbf{e}_{i-1}||e_i$ and $x_i$ and $e_i$ are unknown parts of $\mathbf{x}_{i}$, $\mathbf{e}_{i}$. Further, set
$$\mathbf{b}_{i}=\mathbf{b}_{i-1}||\mathbf{x}_i||\mathbf{x}_iB'_i.$$

 If $|L_i|< L$ 
 then $w_{i+1}=w_i+1$, otherwise $w_{i+1}=w_i$. 

\smallskip
			
\textbf{Return: } $(\mathbf{x}_{v},\mathbf{e}_{v},\mathbf{b}_{v})\in L_v$ with minimal $wt(\mathbf{e}_{v})$.
}}
\end{center}
\caption{Algorithm for finding a codeword close to a given random word}
\label{alg:decoding}
\vspace{-.5cm}
\end{figure}

\subsection{Choosing the Matrices $B_i$}\label{subsec:MatricesBi}
Note first, that at each step \emph{Step~$i$}, effectively, we work with the codes $\mathcal{C}_i$ with generator matrices $G_i=[I_{k_i}\ B_i]$. In particular, we find all the codewords $x_i G_i$ such that  $\mathbf{x}_i\mathbf{G}_i$ is within a radius $w_i$ of a word $\mathbf{b}_{i-1}$. 
Thus, it is desirable that the codes $\mathcal{C}_i$ have as small as possible covering radius. For efficiency reasons of the algorithm, these codes also need to be relatively small. 
Luckily, for small codes, it is not hard to find exactly the least possible covering radius. The authors of  \cite{BaichevaB10} provide a nice classification of the least covering radius of small codes. 
For our purposes, for  $i> 1$, we use small codes of covering radius $R=1$, and some of the possible choices are given in Table~\ref{tab:covRad}. For $i=1$, we typically, choose the code $C_1$ to be of very high rate. Thus we don't need to choose it with particular properties (although it is possible), since it is expected that it has low covering radius. 

\begin{table}[H]
\caption{Concrete codes $(n,k)R$ of covering radius $R=1$, with generator matrix $[I\ B]$}
\label{tab:covRad}
\centering
\vspace{.2cm}
\begin{tabular}{|c|c|}
\hline
 \parbox{2cm}{\begin{center}
 $(n,k)R$
 \end{center}} & \parbox{2cm}{\begin{center}
 matrix $B$
 \end{center}}\\
  \hline
  $(2,1)1$ & 
 $
\left(\begin{array}{c}
 1 \\
\end{array}\right)$\\
  \hline
  $(3,1)1$ & 
 $
\left(\begin{array}{cc}
 1& 1 \\
\end{array}\right)$\\
  \hline
  $(3,2)1$ & 
 $
\left(\begin{array}{c}
 1 \\
 1 \\
\end{array}\right)$\\
\hline
 $(4,3)1$ & 
 $
\left(\begin{array}{c}
 1  \\
 1  \\
 1  \\
\end{array}\right)$\\
\hline

\end{tabular}
\begin{tabular}{c}
\  \\
\end{tabular}
\begin{tabular}{|c|c|}
\hline
 \parbox{2cm}{\begin{center}
 $(n,k)R$
 \end{center}} & \parbox{2cm}{\begin{center}
 matrix $B$
 \end{center}}\\
  \hline
 $(5,3)1$ & 
 $
\left(\begin{array}{cc}
1 & 1  \\
1 & 1  \\
1 & 1  \\
\end{array}\right)$\\
 \hline
  $(5,4)1$ & 
 $
\left(\begin{array}{c}

 1 \\
  1 \\
   1 \\
  1 \\
\end{array}\right)$\\
\hline
\end{tabular}
\end{table}  
\captionsetup{font={footnotesize,rm},justification=centering,labelsep=period}%

\subsection{Average Distortion Estimates}\label{subsec:bounds}
In order to keep the complexity of Algorithm~\ref{alg:decoding} low, not only need the codes $G_i$ be small, but also the size of the maintained lists needs to be low as well. Our algorithm chooses the size of the lists to be bounded by some constant $L$. In this case we can quite accurately estimate the average distortion  achievable by Algorithm~\ref{alg:decoding}, that we will denote by $R_{alg}$.
Note that in general $R_a \leqslant R_{alg}$.
The following theorem provides an estimate of  $R_{alg}$. 

\begin{theorem}\label{thm:Ra}
Let $\mathcal{C}$ be a $(n,k)$ code with a generator matrix $\mathbf{G}$ of the form (\ref{eq:G-matrix}).  Then, in Algorithm~\ref{alg:decoding}, with  round weight limit $w_b$, we have that at each step $i$, $1 \leqslant i \leqslant v$,

1. The expected number of vectors $\mathbf{e}_i\in\F^{K_i+N_i}_2$ such that $wt(\mathbf{e}_i)=j$, is $V_i(j)$ where:
\begin{eqnarray*}
V_1(j)&\!\!=\!\! &\frac{\binom{k_1+n_1}{j}}{2^{n_1}}, {\ \rm for\ } 0 \leqslant j \leqslant w_1,\\
V_i(0)&\!\!=\!\! &\frac{1}{2^{n_i}} V_{i-1}(0),\ \ V_i(j)\ =\ 0 {\rm\ for\ } j > w_i {\rm\ and}\\ 
V_i(j)&=&\sum_{\ell=0}^{\min(j,w_b)}{\frac{\binom{k_i+n_i}{\ell}}{2^{n_i}}\cdot V_{i-1}(j-\ell)}  {\rm\ for} \ \  0 < j \leqslant w_i.
\end{eqnarray*}
\vspace{-.3cm}

2. The expected size of the list $L_i$ is $\displaystyle\sum_{j=0}^{w_i}{V_i(j)}$.

3. $[R_{alg}]$ is the smallest $j$ such that $V_{v}(j) \geqslant 1$.
\end{theorem}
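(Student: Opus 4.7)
The plan is to establish parts 1 and 2 together by induction on $i$, and then obtain part 3 as a threshold claim from part 1.

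For part 1, the key observation is that each \emph{Step~$i$} consumes a fresh block of $\mathbf{c}_0$ (the message coordinates $K_{i-1}+1,\ldots,K_i$ and the parity coordinates $k+N_{i-1}+1,\ldots,k+N_i$) together with the fresh independent random matrix $B'_i$, none of which influences the construction of $L_{i-1}$. In the base case $i=1$, every candidate $\mathbf{x}_1\in\mathbb{F}_2^{k_1}$ yields $\mathbf{e}_1=\mathbf{x}_1\mathbf{G}_1+\mathbf{b}_0$; since $\mathbf{b}_0$ is the restriction of the uniform random $\mathbf{c}_0$ to the relevant $k_1+n_1$ positions, $\mathbf{e}_1$ is uniform over $\mathbb{F}_2^{k_1+n_1}$. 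Summing $\binom{k_1+n_1}{j}/2^{k_1+n_1}$ over the $2^{k_1}$ choices of $\mathbf{x}_1$, and discarding $j>w_1$, recovers the stated $V_1(j)$.

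For the inductive step I write $\mathbf{x}_i=\mathbf{x}_{i-1}\|x_i$ and $\mathbf{e}_i=\mathbf{e}_{i-1}\|e_i$. Because the event ``$\mathbf{x}_{i-1}$ survives in $L_{i-1}$ with $wt(\mathbf{e}_{i-1})=j-\ell$'' depends only on the randomness of steps up to $i-1$, which is independent of the randomness shaping $e_i$, the block $e_i$ is uniform over $\mathbb{F}_2^{k_i+n_i}$ for each choice of $x_i$. Hence for every such $\mathbf{x}_{i-1}$ a fraction $\binom{k_i+n_i}{\ell}/2^{k_i+n_i}$ of the $2^{k_i}$ extensions gives $wt(e_i)=\ell$ on average; applying the round filter $wt(e_i)\leq w_b$ and summing over $\ell$ yields the convolution recurrence for $V_i(j)$. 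The boundary cases $j=0$ and $j>w_i$ fall out directly from the weight cutoff. Part~2 is then immediate by linearity of expectation, since $L_i$ partitions by weight: $\mathbb{E}|L_i|=\sum_{j=0}^{w_i}V_i(j)$.

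Part~3 is a threshold statement: $V_v(j)$ is the expected count of weight-$j$ elements in $L_v$, so for $j<j^\star:=\min\{j:V_v(j)\geq 1\}$ the expected count is strictly below one and a weight-$j$ element is unlikely to appear, whereas at $j=j^\star$ at least one such element is expected. Assuming the counts concentrate about their means (reasonable when $V_v(j)$ grows rapidly through the transition, as happens for binomial-like profiles), the minimum weight attained in $L_v$ equals $j^\star$ with high probability, giving $[R_{alg}]=j^\star$. The main obstacle I anticipate is a rigorous justification of the independence invoked in the inductive step of part~1: one must argue that the filtering defining $L_{i-1}$, built from earlier-step randomness, introduces no correlation with the fresh randomness that determines $e_i$. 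A secondary obstacle is part~3, which as stated is a concentration heuristic; making it rigorous would require controlling how sharply the weight profile of $L_v$ concentrates about its mean near the threshold $j^\star$.
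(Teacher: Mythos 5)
Your proposal is correct and follows essentially the same route as the paper's own proof: the base case counts the $2^{k_1}$ codewords each at distance $j$ with probability $\binom{k_1+n_1}{j}/2^{k_1+n_1}$, the inductive step uses the same convolution over $wt(e_i)=\ell\leq w_b$ with the $1/2^{n_i}$ consistency factor (you enumerate over codeword extensions $x_i$ where the paper enumerates over error patterns $e_i$, but these are dual views giving the identical formula), and parts 2 and 3 are the same linearity-of-expectation and threshold arguments. The independence and concentration caveats you flag are real, but the paper's proof is itself heuristic at exactly those points, so you have not missed anything it supplies.
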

\begin{proof}
1. Since there are $\binom{k_1+n_1}{j}$ different ways to make exactly $j$ changes in $k_1+n_1$ bits, and there are $2^{k_1}$ codewords in $\mathcal{C}_1$ (the code with generator matrix $[I\ B_1]$), it can be expected that among all codewords,  $2^{k_1}\frac{\binom{k_1+n_1}{j}}{2^{k_1 + n_1}} = V_1(j)$ will be at distance exactly $j$ from a random word in $\F^{k_1+n_1}_2$.

Further, if at Step $i$, $V_{i-1}(0)$ is the expected number of vectors $\mathbf{e}_{i-1}$ of weight $0$, in the next step, this number reduces to $2^{k_i}\frac{1}{2^{k_i + n_i}}V_{i-1}(0)$. 

The value of  $V_i(j)$ can be obtained as follows. At each Step $i$ we test each $e_i$ of weight $wt(e_i)=\ell\leqslant w_b$ and each $\mathbf{e}_{i-1}$ in the list $L_{i-1}$ for consistency with~\eqref{eq:testweight}. If consistent, $\mathbf{e}_i$ belongs to $V_i(j)$ only if $wt(\mathbf{e}_{i-1}||e_i)=j$, \ie only if  $\mathbf{e}_{i-1}$ belongs to $V_{i-1}(j-\ell)$. Again, on average, $1/2^{n_i}$ of the tested $e_i$ satisfy~\eqref{eq:testweight}. From here, we immediately obtain the claimed value for $V_i(j)$.

2. Since the list $L_i$ contains all vectors $\mathbf{e}_{i}$ of weight $\leqslant w_i$, we can estimate its size by taking the sum of all $V_i(j)$, $0 \leqslant j \leqslant w_i$.

3. In the last Step $v$, if $V_v(j) \geqslant 1$, then we can expect that in the list $L_v$, there is an element  $\mathbf{e}'_{v}$ of weight $j$. Taking the smallest such $j$ determines the expected average distortion.\qed
\end{proof}

As mentioned in Subsection~\ref{subsec:MatricesBi}, for better results, in practice, we use matrices $B_i$ as in Table~\ref{tab:covRad}, that guarantee that the codes $\mathcal{C}_i$, $i > 1$ with  generator matrix $[I_{k_i}\ B_i]$ have covering radius 1. Thus we can restrict the choice of the round weight limit $w_b$ to small values, typically, $1$ or $2$. Calculating the average distortion for concrete parameters using Theorem~\ref{thm:Ra} shows that the choice of $w_b=2$ gives better results (cf. Section~\ref{sec:practical}). 

\subsection{List Estimates}
Theorem~\ref{thm:Ra} not only provides  the average distortion obtained using Algorithm~\ref{alg:decoding}, it also gives  the size of the lists $L_i$ at each step of Algorithm~\ref{alg:decoding}. For the efficiency of the algorithm it is important to know at what conditions the lists grow or decrease from one step to another.

Let $B_i(j)$ denote the ball of radius $j$ around the zero vector of length $K_i+N_i$, containing the vectors $\mathbf{e}_i$, with $wt(\mathbf{e}_i)\leqslant j$ obtained in Algorithm~\ref{alg:decoding}. Further, we denote by $|B_i(j)|$ the expected size of $B_i(j)$. Then, using the notation from Theorem~\ref{thm:Ra}, $|B_i(j)|=\sum_{s=0}^{j}{V_i(s)}$, and  $|L_i|=|B_i(w_i)|$. Directly from Theorem~\ref{thm:Ra}, we have the following lemma.

\begin{lemma}\label{lem:lists}
Let $w_b=2$. Then, 
\begin{eqnarray}
&& \hspace{-.7cm}{\rm when\ } w_{i+1}=w_{i}\ \nonumber\\
&& |L_{i+1}|=|B_{i+1}(w_{i+1})|=
\frac{1}{2^{n_{i+1}}}|L_i|+ \frac{(k_{i+1}+n_{i+1})}{2^{n_{i+1}}}|B_{i}(w_i\!-\!1)|\!+\!\frac{\binom{k_{i+1}+n_{i+1}}{2}}{2^{n_{i+1}}}|B_{i}(w_i\!-\!2)|, \quad\ \ \\ 
&&\hspace{-.7cm}{\rm and\ when}\ w_{i+1}=w_{i}+1,\ \nonumber\\
&&|L_{i+1}|=|B_{i+1}(w_{i+1})|= \frac{(k_{i+1}\!+\!n_{i+1}+1)}{2^{n_{i+1}}}|L_{i}|+\frac{\binom{k_{i+1}+n_{i+1}}{2}}{2^{n_{i+1}}}|B_{i}(w_i\!-\!1)|.\quad\quad
\end{eqnarray} \qed
\end{lemma}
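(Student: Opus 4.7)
The plan is to derive both identities directly from Theorem~\ref{thm:Ra}(1) by substituting $w_b=2$ into the recursion for $V_i(j)$, and then summing over $j$ from $0$ to $w_{i+1}$.

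First I would unpack the recursion with $w_b=2$. For $j\geq 2$,
\[
V_{i+1}(j)=\frac{1}{2^{n_{i+1}}}V_i(j)+\frac{k_{i+1}+n_{i+1}}{2^{n_{i+1}}}V_i(j-1)+\frac{\binom{k_{i+1}+n_{i+1}}{2}}{2^{n_{i+1}}}V_i(j-2),
\]
and one checks that the boundary formulas $V_{i+1}(0)=\tfrac{1}{2^{n_{i+1}}}V_i(0)$ and $V_{i+1}(1)=\tfrac{1}{2^{n_{i+1}}}V_i(1)+\tfrac{k_{i+1}+n_{i+1}}{2^{n_{i+1}}}V_i(0)$ from Theorem~\ref{thm:Ra}(1) match this same formula under the convention $V_i(-1)=V_i(-2)=0$. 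This uniform form lets me treat all $j$ at once.

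Next, using Theorem~\ref{thm:Ra}(2), $|L_{i+1}|=|B_{i+1}(w_{i+1})|=\sum_{j=0}^{w_{i+1}}V_{i+1}(j)$. I would simply swap the order of summation: the sum of $V_i(j)$ from $j=0$ to $M$ gives $|B_i(M)|$, the sum of $V_i(j-1)$ gives $|B_i(M-1)|$, and the sum of $V_i(j-2)$ gives $|B_i(M-2)|$ (the negative-index terms vanish by convention). In Case~1 with $w_{i+1}=w_i$ the three sums give $|B_i(w_i)|=|L_i|$, $|B_i(w_i-1)|$ and $|B_i(w_i-2)|$ respectively, producing (1) verbatim.

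For Case~2 with $w_{i+1}=w_i+1$, the key observation, which is the only slightly subtle point, is that $V_i(j)=0$ for $j>w_i$ by Theorem~\ref{thm:Ra}(1). Hence $\sum_{j=0}^{w_i+1}V_i(j)=|B_i(w_i)|=|L_i|$, and similarly $\sum_{j=0}^{w_i+1}V_i(j-1)=\sum_{j=-1}^{w_i}V_i(j)=|L_i|$ (the extra top term vanishes and the $j=-1$ term is $0$). The third sum telescopes to $|B_i(w_i-1)|$. Combining the two $|L_i|$ contributions with coefficients $1$ and $k_{i+1}+n_{i+1}$ yields the factor $k_{i+1}+n_{i+1}+1$ of (2).

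I don't expect any real obstacle: both formulas are just bookkeeping on the convolution identity of Theorem~\ref{thm:Ra}. The only step that needs explicit care is handling the zero terms at the boundary (the $j=0,1$ boundary of the $V_{i+1}$ recursion and the $j>w_i$ truncation of $V_i$) so that the shifted sums assemble cleanly into $|L_i|$ and $|B_i(w_i-1)|$, $|B_i(w_i-2)|$.
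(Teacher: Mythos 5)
Your proposal is correct and is precisely the ``direct application of Theorem~\ref{thm:Ra}'' that the paper invokes: the paper omits the proof entirely, ending the lemma with a \qed and the remark that it follows directly from the theorem. Your bookkeeping --- casting the recursion in uniform convolution form with the convention $V_i(-1)=V_i(-2)=0$, summing over $j\le w_{i+1}$, and using the truncation $V_i(j)=0$ for $j>w_i$ to merge the two $|L_i|$ contributions into the coefficient $k_{i+1}+n_{i+1}+1$ in the second case --- supplies exactly the steps the authors left implicit.
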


Before continuing, for clarity of the exposition, we make two simplifications.

First, without loss of generality, we can assume that at a given Step $i$ only $|B_i(w_i)|, |B_i(w_i-1)|,\dots,|B_i(w_i-t)| > \frac{1}{2^s}$,  while $|B_i(0)|,\dots,|B_i(w_i-t-1)|\leqslant \frac{1}{2^s}$ for some integer $s$. Thus for big enough $s$, we will take $|B_i(0)|=0,\dots,|B_i(w_i-t-1)|=0$. (Indeed, this can be safely  assumed for $|B_i(0)|$ since $|B_i(0)|\rightarrow 0$ as $i$ grows.).
Second, we assume that $k_i=k_2$, $n_i=n_2$ for all $i\geqslant 2$. In practice, we will typically choose such parameters.

\begin{proposition}\label{lem:Balls}
Let $w_b = 2$. Let at Step $i$, $|B_i(j)|>0$, for $j \geqslant w_i-t$, and $|B_i(j)|=0$, for $j < w_i-t$. Then, if $w_{i+1}=w_i$, we have that:
\begin{eqnarray}
|B_i(w_i-t)| & > & |B_{i+1}(w_i-t)|,\label{eq:steadystate1}\\ 
|B_i(w_i-t+1)| & > & |B_{i+1}(w_i-t+1)|\ {\rm if\ and\ only\ if}\ \frac{k_2+n_2}{2^{n_2}-1}|B_i(w_i-t)| < |B_i(w_i-t+1)|,\label{eq:steadystate2}\\ 
|B_i(j)| & > & |B_{i+1}(j)|,  w_i-t+1 < j \leqslant w_i, {\rm \ if\ and\ only\ if}\ \nonumber\\
& &  \frac{(k_2+n_2)|B_i(j\!-\!1)|+\binom{k_2+n_2}{2}|B_i(j\!-\!2)|}{2^{n_2}-1} < |B_i(j)|.\label{eq:steadystate3}
\end{eqnarray}
\end{proposition}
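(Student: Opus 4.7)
The plan is to mimic the derivation of Lemma~\ref{lem:lists}, but track the evolution of the whole ball-size function $|B_i(\cdot)|$ rather than only the topmost value $|L_i|=|B_i(w_i)|$. Concretely, summing the recurrence for $V_{i+1}(s)$ from Theorem~\ref{thm:Ra} (part~1) over $s=0,1,\ldots,j$, with $w_b=2$ and with the simplifying assumption $k_{i+1}=k_2$, $n_{i+1}=n_2$, yields a \emph{level-by-level} three-term recurrence
\begin{equation*}
|B_{i+1}(j)| \;=\; \frac{1}{2^{n_2}}|B_i(j)| \;+\; \frac{k_2+n_2}{2^{n_2}}|B_i(j-1)| \;+\; \frac{\binom{k_2+n_2}{2}}{2^{n_2}}|B_i(j-2)|,
\end{equation*}
valid for every $j\leqslant w_{i+1}$, with the convention $|B_i(s)|=0$ for $s<0$. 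This is the single workhorse identity that I would establish first; Lemma~\ref{lem:lists} is just its specialization to $j=w_{i+1}=w_i$.

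Once this identity is in hand, the three inequalities drop out by elementary algebra using the vanishing hypothesis $|B_i(s)|=0$ for $s<w_i-t$. For $j=w_i-t$, both shifted terms vanish, so the identity collapses to $|B_{i+1}(w_i-t)|=2^{-n_2}|B_i(w_i-t)|$, and since $2^{n_2}>1$ this gives~\eqref{eq:steadystate1}. For $j=w_i-t+1$ only the $|B_i(j-2)|$ term vanishes; rearranging $|B_i(j)|>|B_{i+1}(j)|$ by moving the $2^{-n_2}|B_i(j)|$ term to the left and multiplying through by $2^{n_2}/(2^{n_2}-1)$ yields the equivalence stated in~\eqref{eq:steadystate2}. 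For $w_i-t+1<j\leqslant w_i$ all three terms on the right are present and the identical rearrangement produces~\eqref{eq:steadystate3}.

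There is essentially no combinatorial obstacle here — the only point that requires a sentence of care is the generalization of Lemma~\ref{lem:lists} to intermediate weights $j<w_{i+1}$, since the lemma as stated only addresses $|L_{i+1}|$. I would simply note that the derivation in the proof of Theorem~\ref{thm:Ra} works verbatim for the partial sum $|B_{i+1}(j)|=\sum_{s=0}^{j}V_{i+1}(s)$, because the recurrence for $V_{i+1}(s)$ depends only on $V_i(s)$, $V_i(s-1)$, $V_i(s-2)$, whose partial sums are exactly $|B_i(j)|$, $|B_i(j-1)|$, $|B_i(j-2)|$. After that, the proof reduces to three short comparisons and closes with $\qed$.
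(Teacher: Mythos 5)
Your proposal is correct and follows essentially the same route as the paper: both rest on the level-by-level three-term recurrence $|B_{i+1}(j)|=2^{-n_2}\bigl(|B_i(j)|+(k_2+n_2)|B_i(j-1)|+\binom{k_2+n_2}{2}|B_i(j-2)|\bigr)$ obtained by summing the $V$-recurrence of Theorem~\ref{thm:Ra}, and then derive the three claims by dropping the vanishing terms and rearranging. Your explicit justification that this identity holds for intermediate weights $j<w_{i+1}$ (not just $j=w_i$ as in Lemma~\ref{lem:lists}) is a point the paper glosses over, but the argument is otherwise identical.
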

\begin{proof}
The proof is rather straightforward. We have:
\begin{eqnarray*}
 |B_{i+1}|(w_i-t) = \frac{1}{2^{n_2}} |B_i(w_i-t)| < |B_i(w_i-t)|,
\end{eqnarray*}
since $ |B_i(w_i-t-1)|=|B_i(w_i-t-2)|=0$ from the condition.
Further,
\begin{eqnarray*}
 |B_{i+1}(w_i\!-\!t\!+\!1)|=\frac{|B_i(w_i\!-\!t\!+\!1)|+ (k_2+n_2)|B_i(w_i\!-\!t)|}{2^{n_2}}
\end{eqnarray*} 
Now, directly, it can be seen that $\frac{k_2+n_2}{2^{n_2}-1}|B_i(w_i-t)| < |B_i(w_i-t+1)|$ is equivalent to $ |B_{i+1}(w_i-t+1)|< |B_i(w_i-t+1)|$.

Very similarly, the last claim follows directly from the expression for $|B_{i+1}(j)|$.\qed
\end{proof}

The previous simple proposition implies that if at Step $i$, 
\eqref{eq:steadystate2} and \eqref{eq:steadystate3} hold, 
in the next Step $i+1$, for all $j$,  $|B_{i+1}(j)| < |B_{i}(j)|$. 
When $j=w_i$, this guarantees that at Step $i+1$ the size of the list $L_{i+1}$ will be smaller than in the previous step. However, even if \eqref{eq:steadystate2} or \eqref{eq:steadystate3} for some weight $j$ is not satisfied at Step $i$, at a later step $i+s$, if $w_{i+s}=w_i$, an equivalent expressions to \eqref{eq:steadystate3} (for $i+s$ instead of $i$) will hold. 

\begin{proposition}\label{lem:listdecrease}
Let $w_b=2$. There exists an integer $s$, s.t. if $w_i=w_{i+1}=\dots=w_{i+s}$ then $|L_{i+s+1}| < |L_{i+s}|$. 
\end{proposition}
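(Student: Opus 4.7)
The strategy is to apply Proposition~\ref{lem:Balls} repeatedly, moving upward through the weights. Fix $w := w_i$ and observe that, under the hypothesis $w_i = w_{i+1} = \cdots = w_{i+s}$, the evolution of each $|B_j(\ell)|$ for $j \geq i$ is governed by the first case of Lemma~\ref{lem:lists}. By the preceding simplification, only the weights $\ell \in \{w-t,\, w-t+1,\,\dots,\, w\}$ carry nonzero mass.

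I would argue by induction on $\ell$ that after sufficiently many steps with the weight limit held constant, $|B_j(\ell)|$ becomes strictly decreasing in $j$. The base case $\ell = w-t$ is exactly equation~\eqref{eq:steadystate1} of Proposition~\ref{lem:Balls}, which already gives the clean geometric decay $|B_{j+1}(w-t)| = 2^{-n_2} |B_j(w-t)|$.

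For the inductive step, assuming that $|B_j(\ell-1)|$ and $|B_j(\ell-2)|$ are eventually bounded above by geometrically decaying sequences with ratio at most $2^{-n_2}$, I would solve the linear recurrence
\[
|B_{j+1}(\ell)| = \frac{1}{2^{n_2}}|B_j(\ell)| + \frac{k_2+n_2}{2^{n_2}}|B_j(\ell-1)| + \frac{\binom{k_2+n_2}{2}}{2^{n_2}}|B_j(\ell-2)|
\]
and conclude that $|B_j(\ell)|$ is itself eventually of the form $c\, j^{\ell - (w-t)}\, 2^{-n_2 j}$. Hence the ratio between the source terms and $(2^{n_2}-1)|B_j(\ell)|$ tends to $0$, so condition~\eqref{eq:steadystate3} of Proposition~\ref{lem:Balls} is eventually triggered, forcing $|B_{j+1}(\ell)| < |B_j(\ell)|$ for all $j$ past some finite threshold $s_\ell$.

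Since only finitely many weights are in play, setting $s := \max_{\ell} s_\ell$ ensures that after $s$ steps with constant weight limit, $|B_{i+s+1}(w)| < |B_{i+s}(w)|$, i.e.\ $|L_{i+s+1}| < |L_{i+s}|$, as claimed. The main technical hurdle is the inductive step — concretely, verifying the $c\, j^{\ell-(w-t)}\, 2^{-n_2 j}$ asymptotic by diagonalizing the recurrence or iterating directly — but this is a routine linear-recurrence calculation once the induction is set up, and the precise polynomial factor plays no role beyond showing that the geometric factor $2^{-n_2}$ dominates.
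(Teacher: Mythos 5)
Your argument is correct in outline, but it takes a genuinely different route from the paper's. The paper works with the ``deficiency ratio'' $p$ by which condition \eqref{eq:steadystate3} fails, and shows via the recurrence that $1-p$ contracts by a factor of $2^{n_2}$ at each step, so the sequence $p_1,p_2,\dots$ either surpasses $1$ or approaches $1$ exponentially fast; it then concludes somewhat heuristically (``when $p_\ell<1$ is very close to $1$ \dots we can expect that \eqref{eq:steadystate3} will be satisfied in the next step''). You instead solve the driven recurrence from Lemma~\ref{lem:lists} asymptotically, obtaining $|B_j(\ell)|\sim c\,j^{\ell-(w-t)}2^{-n_2 j}$ by induction on the weight level (resonant forcing in a first-order linear recurrence raises the polynomial degree by one), so that the forcing terms in \eqref{eq:steadystate3} are smaller than $|B_j(\ell)|$ by a factor of order $1/j$ and the decrease condition is eventually triggered unconditionally. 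Your version, once the routine recurrence calculation is written out, is actually \emph{more} rigorous than the paper's and yields the stronger quantitative statement that the step-to-step ratio $|L_{j+1}|/|L_j|$ tends to $2^{-n_2}$. Two small points to tighten: the inductive hypothesis should be the polynomial-times-geometric asymptotic itself rather than ``bounded above by a geometric sequence with ratio at most $2^{-n_2}$'' (the consecutive ratios of $c\,j^m 2^{-n_2 j}$ exceed $2^{-n_2}$, they only tend to it); and you should note that the support condition of Proposition~\ref{lem:Balls} ($|B_j(\ell)|=0$ for $\ell<w-t$) is preserved from step to step, which is immediate from the recurrence but is what legitimizes the base case \eqref{eq:steadystate1} at every step $j\geq i$.
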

\begin{proof}
We will show that \eqref{eq:steadystate3} will eventually be satisfied. From here, the claim will follow immediately. 

Suppose $|B_i(w_i\!-\!t\!+\!1)|=p\frac{k_2+n_2}{2^{n_2}-1}|B_i(w_i\!-\! t)|$, where $0<p<1$.
Then $|B_{i+1}(w_i-t)|=\frac{1}{2^{n_2}}|B_{i}(w_i-t)|$, and it can easily be verified that $|B_i(w_i\!-\!t\!+\!1)|=\frac{k_2+n_2}{2^{n_2}-1}|B_i(w_i\!-\! t)(p-1+2^{n_2})|$, and since $(p-1+2^{n_2})> 1$ for every $n_2\geqslant 1$, we get that $|B_{i+1}(w_i\!-\!t\!+\!1)|>\frac{k_2+n_2}{2^{n_2}-1}|B_{i+1}(w_i\!-\! t)|$.\\
Next, suppose, \eqref{eq:steadystate2} and \eqref{eq:steadystate3} hold for $j<s$, but $|B_i(s)|=p_1\frac{(k_2+n_2)|B_i(s-1)|+\binom{k_2+n_2}{2}|B_i(s-2)|}{2^{n_2}-1}$, where $0<p_1<1$.
Similarly, let $|B_{i+1}(s)|=p_2\frac{(k_2+n_2)|B_{i+1}(s-1)|+\binom{k_2+n_2}{2}|B_{i+1}(s-2)|}{2^{n_2}-1}$, for some $p_2$.
Then, it can be verified that $|B_{i+1}(s)|=\frac{p_1+2^{n_2}-1}{2^{n_2}}\cdot\frac{(k_2+n_2)|B_i(s-1)|+\binom{k_2+n_2}{2}|B_i(s-2)|}{2^{n_2}-1}$. From the assumption that $|B_i(s-1)|>|B_{i+1}(s-1)|$, $|B_i(s-2)|>|B_{i+1}(s-2)|$, we get that 
$$p_2> \frac{p_1+2^{n_2}-1}{2^{n_2}},\ \ie\ 1-p_1> 2^{n_2}(1-p_2).$$
This means that either $p_2>1$, in which case either \eqref{eq:steadystate3} is satisfied for step $i+1$, or if $p_2\leqslant 1$, then it is $2^{n_2}$ times closer to $1$ than $p_1$. Thus if we repeat the process for $i+2,i+3,...$, the sequence $p_1,p_2,p_3,\dots$ will either surpass $1$ or will exponentially fast approach $1$. When $p_\ell<1$ is very close to $1$, (depending on the parameters) we can expect that \eqref{eq:steadystate3} will be satisfied in the next step.\qed
\end{proof}

 For the practical parameters that will be given in Section~\ref{sec:practical}, it was observed that $|L_{i+s+1}| < |L_{i+s}|$ becomes true in just a couple of steps.

The previous discussion was concerned with Steps $i$, when the weight $w_i$, does not change throughout the steps. Since in this case, the size of the lists $|L_i|$ will eventually (after a few steps) start decreasing, at one point the condition $|L_i|< L$, will be satisfied, and  Algorithm~\ref{alg:decoding} will increase the weight to $w_{i+1}=w_i$. The goal of this step, is to increase the list again. 
A direct application of Lemma~\ref{lem:lists}, yields:
\begin{proposition}\label{{lem:listincrease}}
Let $w_b=2$. Then, when $w_{i+1}=w_{i}+1$, 
\begin{eqnarray}
|L_{i+1}|>|L_{i}| {\rm \ if\ and\ only\ if\ }
|B_{i}|> \frac{2^{n_2}\!-\!k_2\!-\!n_2\!-\!1}{\binom{k_2+n_2}{2}}|L_{i}|.\!\! \label{eq:growlist}
\end{eqnarray}\qed
\end{proposition}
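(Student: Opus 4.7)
The plan is to unwind the statement directly from Lemma~\ref{lem:lists}, so essentially no new machinery is needed. Interpreting $|B_i|$ in the proposition as $|B_i(w_i-1)|$ (which is how the term appears in the second equation of Lemma~\ref{lem:lists}), and using the simplification $k_i = k_2$, $n_i = n_2$ for $i \geq 2$ introduced just before Proposition~\ref{lem:Balls}, the second identity in Lemma~\ref{lem:lists} becomes
\begin{equation*}
|L_{i+1}| \;=\; \frac{k_2+n_2+1}{2^{n_2}}\,|L_i| \;+\; \frac{\binom{k_2+n_2}{2}}{2^{n_2}}\,|B_i(w_i-1)|
\end{equation*}
under the assumption $w_{i+1} = w_i+1$.

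Next, I would set up the inequality $|L_{i+1}|>|L_i|$ and rearrange, moving the $|L_i|$ contribution coming from the first term to the right-hand side:
\begin{equation*}
\frac{\binom{k_2+n_2}{2}}{2^{n_2}}\,|B_i(w_i-1)| \;>\; \left(1-\frac{k_2+n_2+1}{2^{n_2}}\right)|L_i| \;=\; \frac{2^{n_2}-k_2-n_2-1}{2^{n_2}}\,|L_i|.
\end{equation*}
Clearing the common factor $1/2^{n_2}$ and dividing by $\binom{k_2+n_2}{2}$ gives exactly the stated bound
\begin{equation*}
|B_i(w_i-1)| \;>\; \frac{2^{n_2}-k_2-n_2-1}{\binom{k_2+n_2}{2}}\,|L_i|.
\end{equation*}
The converse direction is immediate since every step above is an equivalence (one should implicitly assume $2^{n_2} > k_2+n_2+1$, which is the regime of interest — otherwise the right-hand side is non-positive and the inequality is automatic, consistent with the first term of $|L_{i+1}|$ already exceeding $|L_i|$).

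There is no real obstacle here; the content of the proposition is simply the threshold extracted from the recurrence. The only thing one needs to be careful about is matching the abbreviated notation $|B_i|$ with the correct radius $w_i-1$, and noting that under the standing simplifying assumption the coefficients in Lemma~\ref{lem:lists} involve $k_2,n_2$ rather than $k_{i+1},n_{i+1}$.
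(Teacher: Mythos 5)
Your proposal is correct and matches the paper's intent exactly: the paper offers no written proof beyond calling the proposition ``a direct application of Lemma~\ref{lem:lists},'' and your derivation is precisely that rearrangement of the second identity in the lemma, including the right reading of the abbreviated $|B_i|$ as $|B_i(w_i-1)|$. (One small note: the equivalence holds without assuming $2^{n_2}>k_2+n_2+1$, since you only ever divide by the positive quantity $\binom{k_2+n_2}{2}/2^{n_2}$; when the right-hand side is non-positive the condition is simply vacuously true, as you observe.)
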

For the chosen practical parameters from Section~\ref{sec:practical}, the right hand side in \eqref{eq:growlist} will always be negative, and thus increasing the weight $w_i$ will always increase the size of the list.

\subsection{Complexity of Algorithm~\ref{alg:decoding}}
We will first take a look at the first round since it is different from the rest. Filling up the list $L_1$ would require testing all $k_1+n_1$ bit words $\mathbf{e}_1$ of weight at most $w_i$ for consistency with \eqref{eq:testweight}, and finding the appropriate vectors $\mathbf{b}_1$. The complexity of this part is $\mathcal{O}((k_1+n_1)^{w_1}k_1n_2)$. This immediately implies that $w_1$ needs to be chosen very small, typically $\leqslant 3$.

Next, at each Step $i$, we consider only $e_i$ of weight $wt(e_i)\leqslant 2$. Testing all the elements from $L_{i-1}$, would thus take $|L_{i-1}|(k_2+n_2)^2$, and forming the list $L_i$ would take additional $|L_i|\mathcal{O}((k_1+(i-1)k_2)n_2)=|L_i|\mathcal{O}(k)$. Since we restrict the size of the lists to $\mathcal{O}(L)$, we have that the complexity of one round is  $\mathcal{O}(Lk)$. In total there are $v=\mathcal{O}(n)$ rounds, so the total complexity of the algorithm amounts to $\mathcal{O}(Lkn)$. 
Note that this is only a rough estimate of the complexity, without taking into account any implementation optimizations. 

\section{Practical Parameters and Experimental Results}\label{sec:practical}
Following the design choices justified in the previous sections, we have formed several codes suitable for practical use. For all codes, $w_1=2$ and $w_b=2$. The other parameters  are summarized in Table~\ref{tab:paramsCodes}.
\begin{table}[H]
\caption{Parameters for St-Gen codes for matrix embedding}
\label{tab:paramsCodes}
\vspace{.3cm}
\centering
\begin{tabular}{|c|c|c|c|c|c|c|}
\hline
\ \ $n_1$\ \  & \ \ $k_1$\ \ & \ \ $n_2$\ \ & \ \ $k_2$\ \ &  \quad \ $\mathcal{C}_i$\quad \ &  \ \ $\mathcal{C}$ ($n\approx 1000$) & \ \ $\mathcal{C}$ ($n\approx 1500$)\\
  \hline
   $2$ &  $14$ &  $2$ & $1$ &  $(3,1)1$ &  $(1000,343)$ &  $(1501,509)$  \\
   $1$ &  $14$ &  $1$ & $1$ &  $(2,1)1$ &  $(1001,507)$ &  $(1501,757)$ \\  
   $2$ &  $14$ &  $2$ & $3$ &  $(5,3)1$ &  $(999,603)$ &  $(1499,903)$ \\
   $1$ &  $14$ &  $1$ & $2$ &  $(3,2)1$ &  $(1002,672)$ & $(1500,1004)$ \\
   $1$ &  $14$ &  $1$ & $3$ &  $(4,3)1$ &  $(1003,755)$  &  $(1503,1130)$ \\
   $1$ &  $14$ &  $1$ & $4$ &  $(5,4)1$ &  $(1000,802)$  &  $(1500,1202)$ \\
\hline
\end{tabular}
\end{table}
\captionsetup{font={footnotesize,rm},justification=centering,labelsep=period}%

We have performed an extensive set of experiments to test the performance  of our codes in terms of their embedding efficiency $e(\alpha)$. The experiments were performed using an initial implementation in Magma \cite{Magma}. All the results presented in this section are obtained as an average over 50 experiments for each code. 
 
Figure~\ref{fig:resultsAll} provides a comparison of the St-Gen codes of length $n\approx 1500$ to other previously known codes: the binary Hamming codes \cite{BierbrauerF08}, the BDS codes \cite{BierbrauerF08}, the LDGM codes \cite{fridrich2007practical}, as well as to two theoretically derived stego-code families (SCF) of Golay codes and LDGM codes \cite{ZhangZW10}. 
It can be seen from the figure that our codes have approximately the same embedding efficiency as the LDGM codes, with one important difference: The presented performance of the LDGM codes is achieved for much bigger lengths, namely for $n=10000$.  
\begin{figure}[H]\centering
{\footnotesize$e(\alpha)$}\qquad\qquad\ \qquad\qquad\qquad\qquad\qquad\qquad\qquad\qquad\qquad\qquad\qquad\qquad\ \ \\
\hspace{-.0cm}\includegraphics[scale=.52]{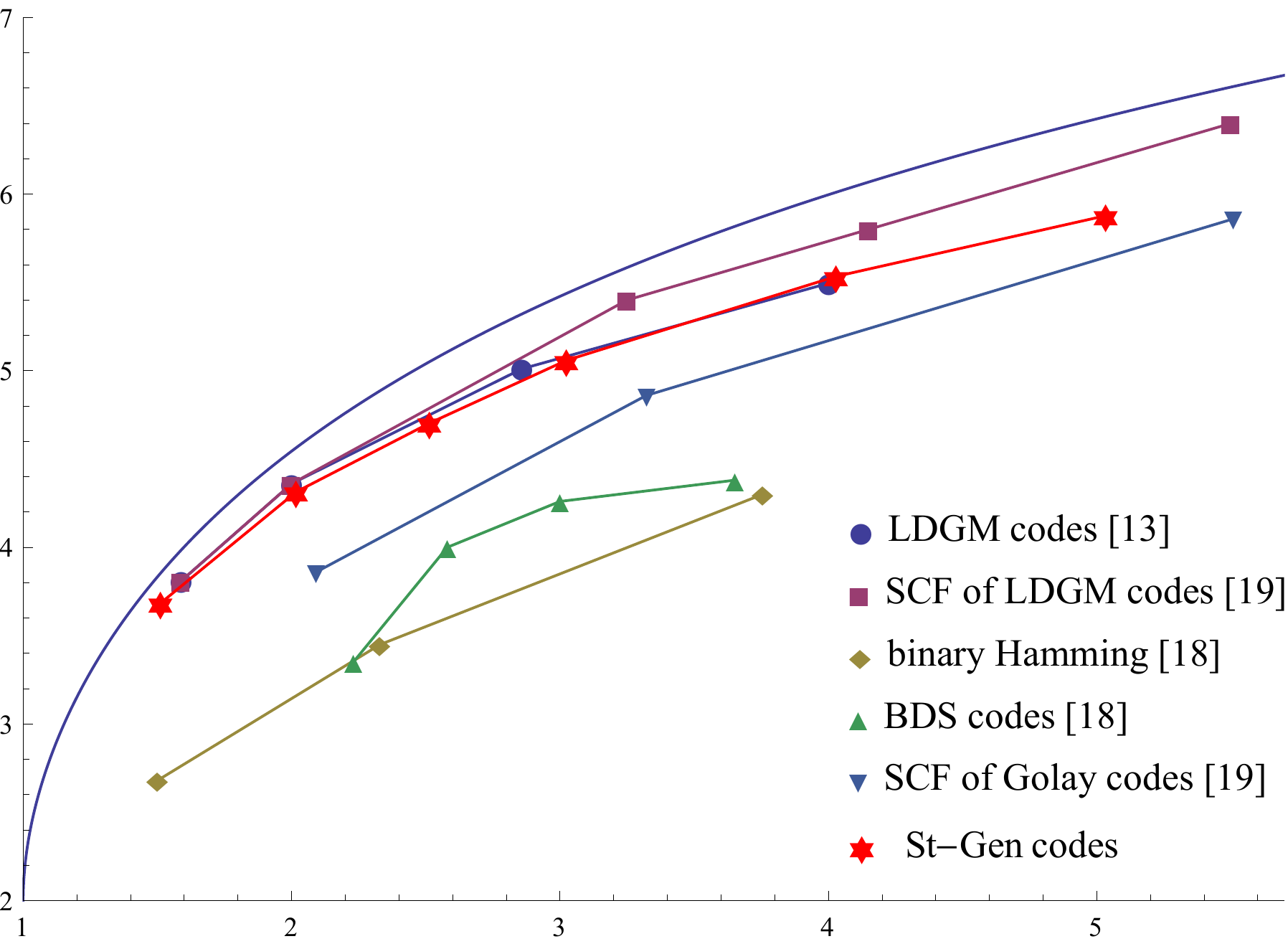}
{\footnotesize\hspace{-.5cm}$1/\alpha$} 
\caption{Comparison of the embedding efficiency of St-Gen codes  with other known codes for matrix embedding}  
\label{fig:resultsAll}
\vspace{-.4cm}
\end{figure}

For lengths $1000-1500$, the authors of \cite{fridrich2007practical} provide information only for $\alpha=1/2$.
Plotted together with our codes in Figure~\ref{fig:resultsDiffLengths}, we see that the embedding efficiency of the LDGM codes is quite smaller than that of the St-Gen codes. Our assumption is that the behaviour is similar for other embedding rates as well.

\begin{figure}[H]\centering
{\footnotesize$e(\alpha)$}\qquad\qquad\qquad\qquad\qquad\qquad\qquad\qquad\qquad\qquad\qquad\qquad\qquad\qquad\ \\
\includegraphics[scale=.52]{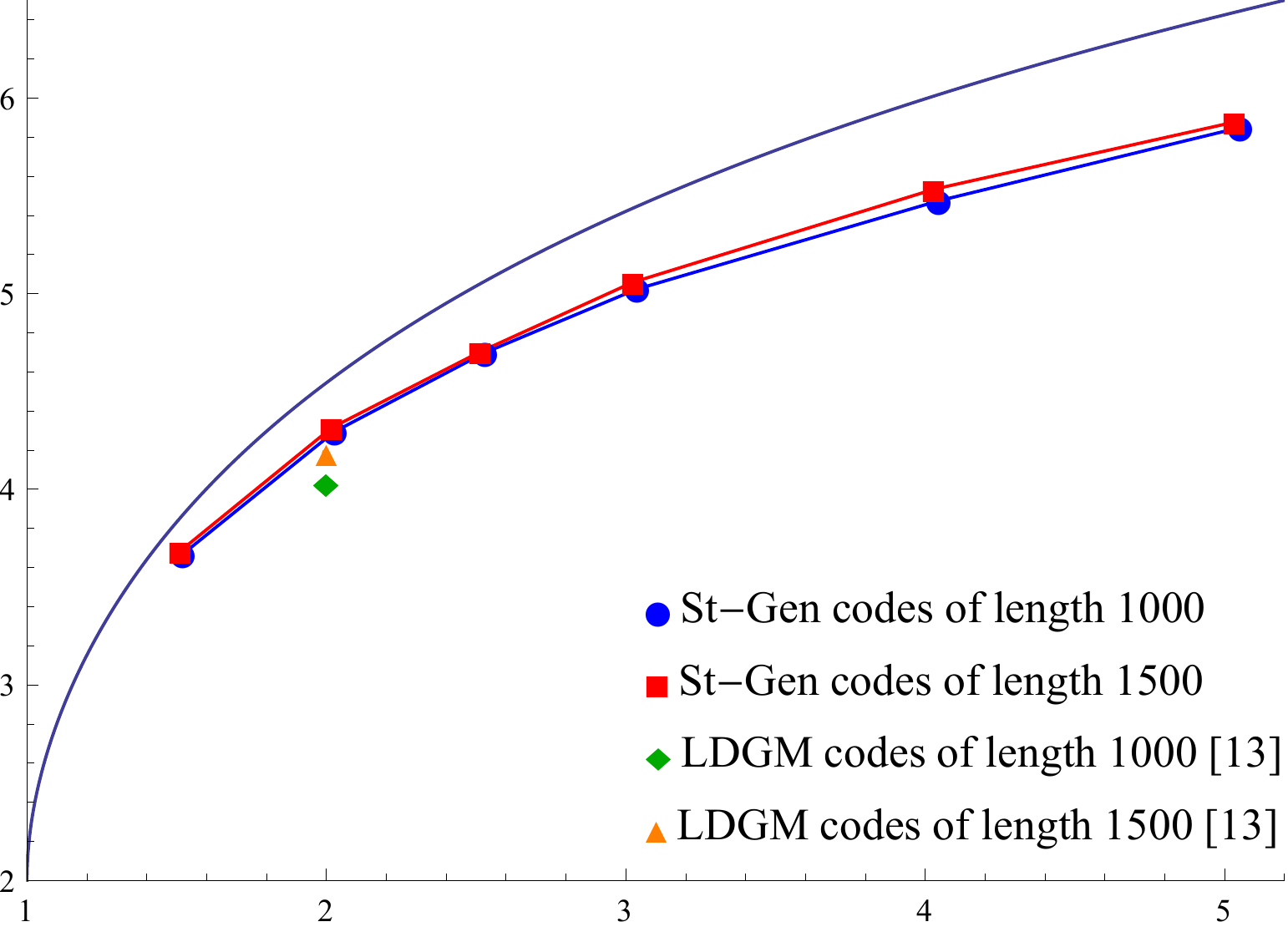}
{\footnotesize\hspace{-.0cm}$1/\alpha$}
\caption{Comparison of the embedding efficiency of St-Gen codes for different lengths of the codes.}
\label{fig:resultsDiffLengths}
\end{figure}
We have also tested how well the experimental results fit the theoretical estimates from Theorem~\ref{thm:Ra}. Figure~\ref{fig:resultsTheoryvsPractice} shows a small offset between the two, and it remains an open problem to investigate whether the offset has a true significance. The last Figure~\ref{fig:resultsDiffWeights}, shows the difference in performance, obtained experimentally, depending on the choice of the round weight limit $w_b$. 
\begin{multicols}{2}
\begin{figure}[H]
{\footnotesize$e(\alpha)$}\\
\includegraphics[scale=.5]{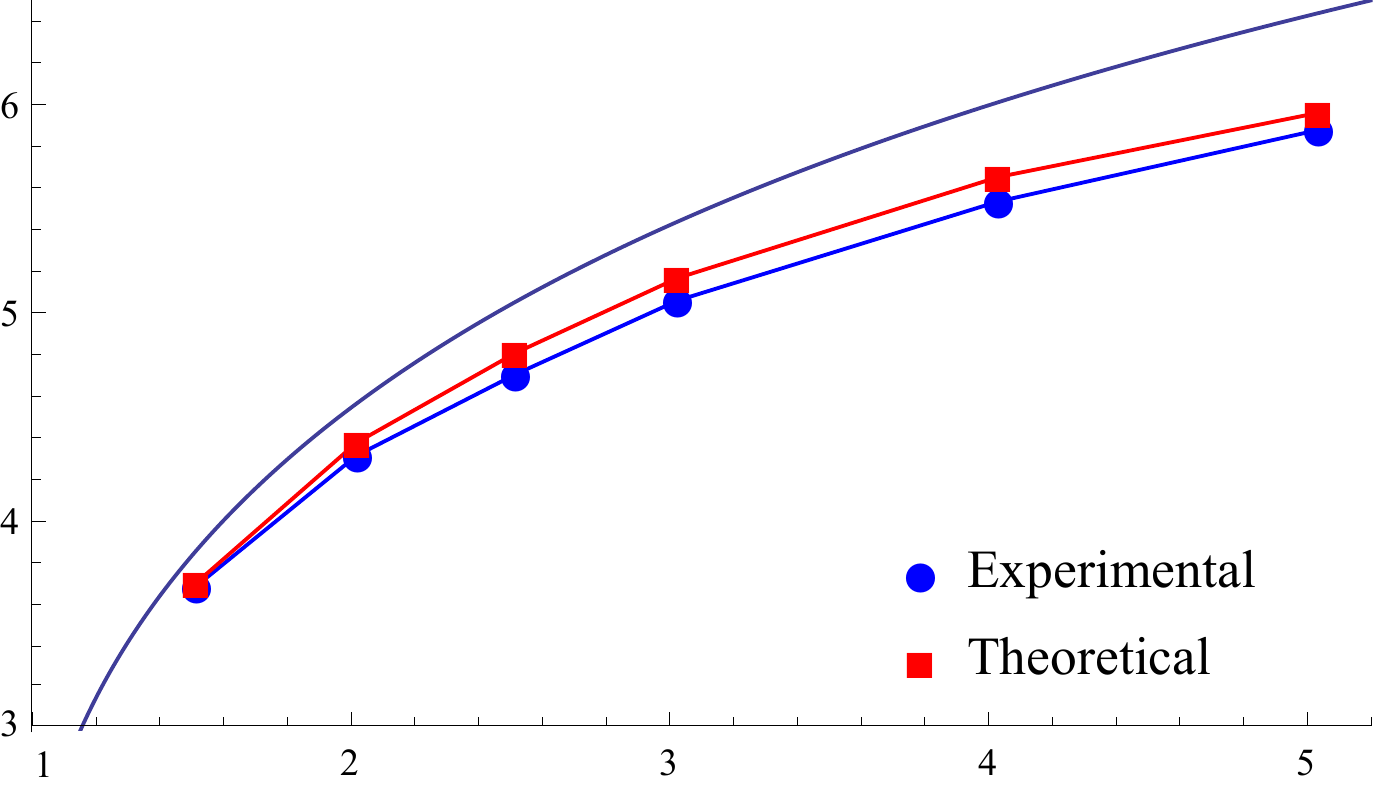} 
{\footnotesize\hspace{-.2cm}$1/\alpha$}
\caption{Comparison of the theoretical results from Theorem~\ref{thm:Ra} and the obtained practical results for St-Gen codes}
\label{fig:resultsTheoryvsPractice}
\vspace{-.4cm}
\end{figure}
\begin{figure}[H]
{\footnotesize$e(\alpha)$}\\
{\centering
\includegraphics[scale=.5]{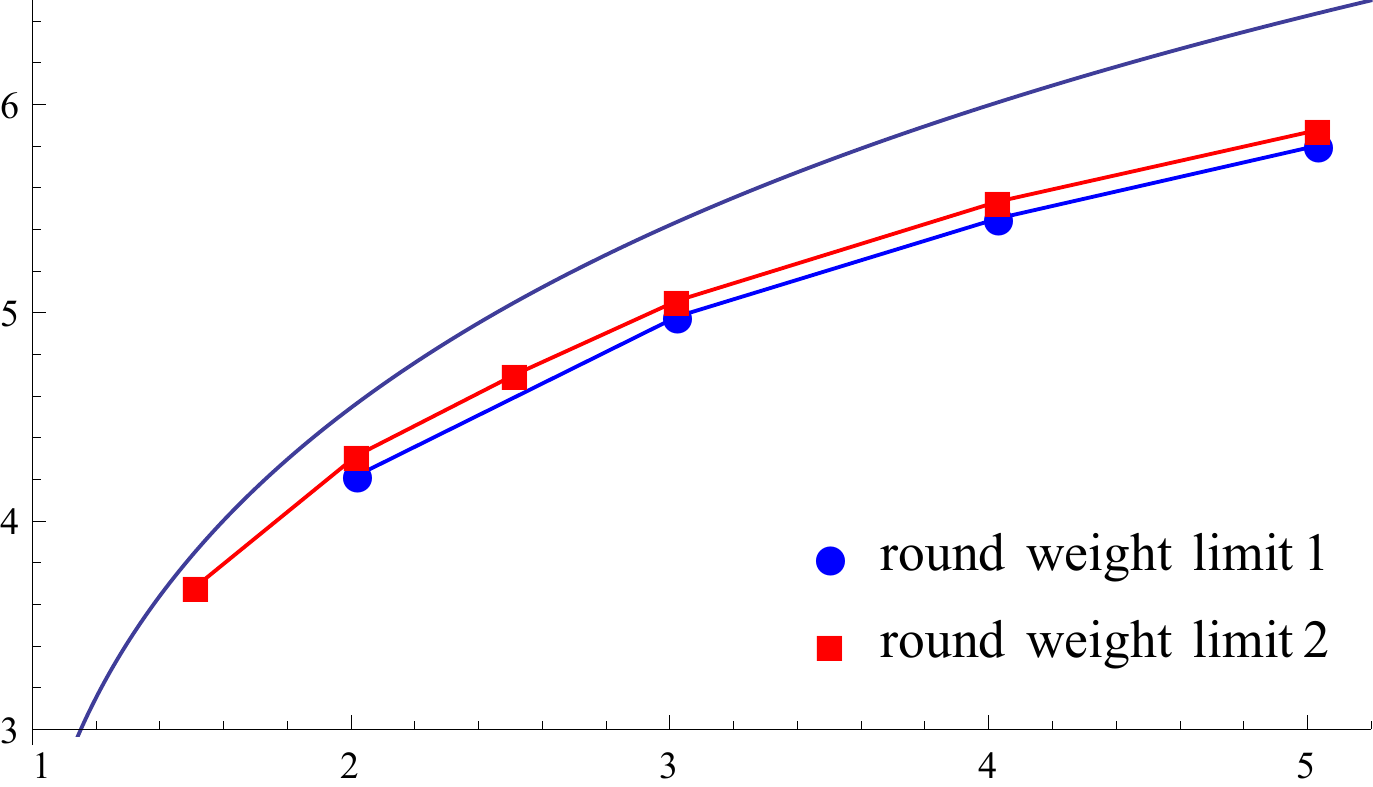} 
{\footnotesize\hspace{-.2cm}$1/\alpha$}}
\caption{Comparison of the embedding efficiency of St-Gen codes for different choice of the round weight limit in Algoritam~\ref{alg:decoding}.}
\label{fig:resultsDiffWeights}
\vspace{-.4cm}
\end{figure}
\end{multicols}

\section{Conclusions}
We introduced a new family of binary linear codes called Staircase-Generated codes, whose structure is suitable for applying list decoding techniques. We proposed an algorithm 
that searches for a close codeword, and as our theoretical analysis and experiments confirm, it shows very good performance results: Our codes achieve approximately the same embedding efficiency as the currently best codes for matrix embedding, but for lengths at least an order of magnitude smaller.

Having a proof of concept, our future work will be, firstly, directed towards making a fast, optimized implementation, for ex. in C, and applying different techniques for reducing the running time of the algorithm. 
On the theoretical side, we plan to extend the analysis to estimating the covering radius and the average distance of the codes for different and more general scenarios.

\section*{Acknowledgements}\small
We thank the anonymous referees for their comments which helped improve this work. The first author of the paper has been partially supported by FCSE, UKIM, Macedonia and the COINS Research School of Computer and Information Security, Norway.

\bibliographystyle{spmpsci}

\end{document}